\theoremstyle{plain}
\newtheorem{thm}{Theorem}[section]
\newtheorem{cor}[thm]{Corollary}
\newtheorem{prop}[thm]{Proposition}
\theoremstyle{definition}
\newtheorem{dfn}[thm]{Definition}
\newtheorem{problem}[thm]{Problem}
\theoremstyle{remark}
\newtheorem{rem}[thm]{Remark}
\numberwithin{equation}{section}
\begin{document}
\title[A Characterization of Minimum Spanning Tree-like Metric Spaces] 
{A Characterization of Minimum Spanning Tree-like Metric Spaces}
\author{Momoko Hayamizu$^{\ast\dagger}$}
\thanks{*~Department of Statistical Science, The Graduate University of Advanced Studies}
\thanks{\dag~The Institute of Statistical Mathematics}
\author{Hiroshi Endo$^{\ddagger}$}
\thanks{\ddag~Department of Clinical Application, Center for iPS Cell Research and Application (CiRA), Kyoto University}
\author{Kenji Fukumizu$^{\dagger\ast}$}
\curraddr[Momoko Hayamizu, Kenji Fukumizu]{Institute of Statistical Mathematics\\
 10-3 Midori-cho\\ Tachikawa, Tokyo 190-8562\\ Japan}
\email[M.~Hayamizu, K.~Fukumizu]{\{hayamizu,fukumizu\}@ism.ac.jp}
\curraddr[Hiroshi Endo]{Department of Clinical Application, Center for iPS Cell Research and Application (CiRA), Kyoto University\\53 Kawahara-cho, Shogoin\\Sakyo-ku, Kyoto 606-8507\\ Japan}
\email[H.~Endo]{hiroshi.endo@cira.kyoto-u.ac.jp}

\subjclass[2010]{Primary 05C12; Secondary 05C05}
\keywords{cellular differentiation, distance-based tree estimation, fully labeled tree, minimum spanning tree,  four-point condition, fourth-point condition}
\maketitle

\begin{abstract}
Recent years have witnessed a surge of biological interest in  the minimum spanning tree (MST) problem for its relevance to  automatic model construction using the distances between data points. Despite the increasing use of MST algorithms for this purpose, the goodness-of-fit of an MST to the data is often elusive because no quantitative criteria have been developed to measure it. 
 Motivated by this, we provide a necessary and sufficient condition to ensure that a metric space on $n$ points can be represented by a fully labeled tree on $n$ vertices, and thereby determine when an MST preserves all  pairwise distances between points in a finite metric space. 
\end{abstract}

\section{Introduction}\label{Introduction}
Classical methods for the minimum spanning tree (MST) problem have gained  increasing popularity as a data analysis tool across different disciplines of biology. In fact, algorithms such as Kruskal's and Prim's have been frequently used in molecular epidemiology to elucidate genetic relationships among bacteria~\cite{SH}, and more recently have also attracted much attention for their potential to revolutionize the current understanding of cellular differentiation, as we now explain. 

Cellular differentiation refers to the process by which a less specialized cell becomes a more specialized one.
As illustrated in Figure~\ref{fig:HSCtree}, stem cells are capable of differentiating into any   type of cells, but once a stem cell has begun to differentiate, it gradually loses this ability and proceeds through  intermediate stages, and ends up becoming a terminally differentiated cell type. 

\begin{figure}[htbp]\label{}
\centering
\includegraphics[width=0.55\textwidth]{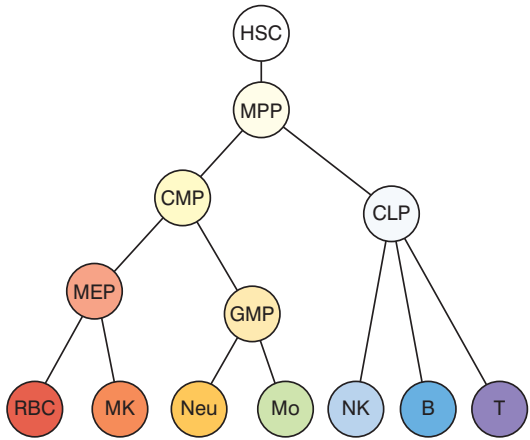}
\caption{The traditional model for the differentiation of blood-related cells~\cite{AT}. A hematopoietic stem cell (HSC) is placed at the apex for its potential to differentiate into any other cell type. The internal vertices of the tree signify cells at intermediate stages of differentiation, and the seven leaves  represent terminally differentiated cells.  
MPP: multipotent progenitor; CMP: common myeloid progenitor; CLP: common lymphoid progenitor; MEP:  megakaryocyte-erythroid progenitor; GMP:  granulocyte-macrophage progenitor; RBC: red blood cell; MK: megakaryocyte; Neu: neutrophil; Mo: monocyte; NK: natural killer cell; B: B-lymphocyte; T: T-lymphocyte.    \label{fig:HSCtree}}
\end{figure}

Although  the essence of the phenomenon can be described by a tree, research on distance-based cellular tree construction is still at a very  early stage because it has only recently become possible to calculate cell-to-cell distances. 
Unlike the process of evolution of organisms, cellular differentiation does not involve a change in the genome of a cell. Therefore, the differentiation status of a cell (\textit{i.e.}, the cell type it is becoming and the degree of its maturity) is defined  by factors other than the genome, such as the transcriptome, epigenome, and proteome, but    such ``omics'' data of an individual cell have  never been  available until the recent emergence of single-cell transcriptome profiling technology. Since then, it has been feasible to measure the expression of thousands of genes in each cell~\cite{MG}, and this has finally enabled us to quantify distances between cells based on differences in  gene expression patterns.

Thus, algorithms for the MST problem have naturally found their applications in stem cell biology. 
For $m$ genes and $n$ individual cells,  the gene expression profile of the $i$-th cell  is represented by an $m$-dimensional vector $x_i$ ($i=1,\cdots,n$), and the pairwise distances between expression profiles are calculated using a distance function of choice and are stored in an $n\times n$ distance matrix $D$. 
Given $D$ as an input, solving the MST problem yields a spanning tree $T$ that extracts the $n-1$ closest pairs of cells. It then makes sense to use MSTs for the purpose of data-driven cellular tree construction (\textit{e.g.}, \cite{JTR,MLK}). In fact, MST-based methods are not only  plausible but already revealing biologically intriguing insights (\textit{e.g.}, \cite{Guo, Q, T}). 

However, a fundamental issue to be clarified is how to judge whether $T$ is a good model to represent $D$. The answer to this question is not always straightforward, since there is no criterion for measuring the goodness-of-fit between $D$ and $T$. Although the four-point condition, which we will discuss in Section~\ref{Four-point},  is a well-known characterization for when $D$ can be represented by a tree,  it does not tell us whether $D$ can be represented by  a \emph{spanning} tree. 
Also, one can create a distance matrix $D_T$ from $T$ by using the shortest path metric in $T$ and  calculate $\|D-D_T\|_p$ to compare the matrices $D$ and $D_T$, but a larger discrepancy between  $D$ and $D_T$ measured in $L_p$ norm does not imply a greater deviation of $T$ from the data;  the value of $\|D-D_T\|_p$ overestimates differences in weights of internal edges compared to those of terminal edges of $T$. 

Motivated by this---and inspired by the central role the four-point condition plays in  the theory of $\delta$-hyperbolic metric spaces~\cite{G,SS}---we seek for a mathematical expression  
presented as an equality or an inequality 
that could lead to criteria for measuring the ``spanning tree-likeness'' of a finite metric space.
Therefore, the primary goal of this paper is to determine when a  distance matrix  of size $n$ can be represented by a fully labeled tree on $n$ vertices (Problem~\ref{problem}). While our recent work~\cite{HF}  provided a partial answer to this problem by making an assumption on finite metric spaces, in the present paper we remove the assumption and settle it completely (Theorem~\ref{thm}). In Section~\ref{Relationship}, we will show how this result is related to  the MST problem.

\section{Definitions and Notation}
Throughout this paper, $X$ denotes a finite set $\{x_1,\cdots,x_n\}$ of $n$ distinct elements, which is called a \emph{label set}. A label set $X$ may consist of any kind of objects. For example, suppose an element $x_i$ of  $X$ is an $m$-dimensional vector  that represents expression measurements of $m$ genes within an individual cell $i$.

\subsection{Metric Spaces}
\begin{dfn}\label{metric}
Given a set $S$, a function $d_M:S\times S\mapsto \mathbb{R}$ is said to be a \emph{metric} on $S$ if, for all $x,y,z\in S$, the following conditions hold:
\begin{enumerate}
\item $d_M(x,y)\geq 0$ (non-negativity);
\item $d_M(x,y)=0\Leftrightarrow x=y$ (identity of indiscernibles);
\item $d_M(x,y)=d_M(y,z)$ (symmetry);
\item $d_M(x,y)\leq d_M(x,z)+d_M(z,y)$ (triangule inequality).
\end{enumerate}
\end{dfn} 

A finite set $X$ equipped with a metric $d_M$ is said to be a \emph{finite metric space}, and is denoted by $(X,d_M)$. 
Once we have chosen a metric $d_M$ on $X$,  we can measure the pairwise distance $d_M(x_i,x_j)$ between gene expression profiles of cell~$i$ and cell~$j$. The square matrix $D$ of order $n$ with $D(i,j):=d_M(x_i,x_j)$  is called a \emph{distance matrix}.

\begin{dfn}[]\label{I}
Given two distinct points $x$ and $x^{\prime}$ in a finite metric space $(X,d_M)$, the \emph{closed metric interval} $I(x,x^{\prime})$ between them is defined to be the set 
\[
I(x,x^{\prime}):=\{i\in X: d_M(x,x^{\prime})=d_M(x,i)+d_M(i,x^{\prime})\}.
\]
\end{dfn}

\subsection{Graphs}
All graphs in this paper are finite, simple, connected, and undirected, and positive weighted. 
An edge of a graph that joins two vertices $x$ and $y$ is denoted by $xy$. 
Given a graph $G$, the sets of vertices and edges are denoted by $V(G)$ and $E(G)$, respectively. 
Given a label set $X$ and an unlabeled graph $U$, a vertex labeling of $U$ is specified by a map $\phi: X\mapsto V(U)$. The map $\phi$ is called a \emph{labeling map}, and the resulting labeled graph is said to be  a graph (\emph{on}  $V(U)$) \emph{labeled by} $X$. 
A graph labeled by $X$ is denoted by $(V,E;X,\phi,w)$ for a set $V$ of unlabeled vertices, a set $E$ of edges, a vertex-labeling map $\phi:X\mapsto V$, and an edge-weighting function $w:E\mapsto \mathbb{R}^+$. 
Note that $\phi$ is not necessarily surjective (\textit{i.e}, some vertices are labeled, but not necessarily all) and that $w$ is strictly positive. 
The distance in  $G$ is defined to be the shortest path metric in $G$, and is denoted by $d_G$.

A graph is called a \emph{tree} if it is  connected and it has no cycle. All trees considered here are unrooted. If a graph $G$ is a tree, there is a unique path that joins two vertices $x$ and $y$  in  $G$, which is represented using $[x,\cdots,y]$; in particular, we use $[x, i, \cdots, y]$ to mean that a vertex $i$ is contained in the path and that $i$ is adjacent to  $x$.  

\begin{dfn}[]\label{}
Assume $X$ is a label set. 
Two graphs $G_i:=(V_i,E_i;X,\phi_i,w_i)$ $(i=1, 2)$ labeled by $X$ are said to be \emph{isomorphic} (\emph{as vertex-labeled, edge-weighted graphs}) if there is a one-to-one correspondence $f:V_1\mapsto V_2$ that satisfies the following:
\begin{itemize}
\item for any two distinct vertices $x,y\in V_1$, $xy\in E_1$ if and only if $f(x)f(y)\in E_2$; 
\item for any $xy\in E_1$, $w_1(xy)=w_2(f(x)f(y))$;
\item $\phi_2=f\circ\phi_1$.
\end{itemize}
\end{dfn}

\begin{dfn}[]\label{G}
Assume $M:=(X, d_M)$ is a finite metric space, and suppose   $G:=(V,E;X,\phi, w)$ is a graph. 
\begin{itemize}
\item The labeling map $\phi:X\mapsto V$ is said to be  \emph{distance-preserving} if, for all $x,y\in X$,   
\[
 d_G(\phi(x),\phi(y))=d_M(x,y).
\]
\item The graph $G$ is said to be a \emph{fully labeled graph representation of $M$} if both of the following conditions hold: 
\begin{enumerate}
\item $\phi$ is a distance-preserving labeling map;
\item $\phi:X\mapsto V$ is bijective.
\end{enumerate}
\end{itemize}
\end{dfn}

\begin{rem}
The condition (1) in Definition~\ref{G} implies that  $\phi:X\mapsto V$ is injective (otherwise, the identity of indiscernibles in Definition~\ref{metric} would not hold).
\end{rem}

\begin{dfn}[]\label{K_M}
Given a finite metric space $M$, a \emph{complete graph representation $K_M$ of $M$} is defined to be a complete graph that is a fully labeled graph representation of $M$. 
\end{dfn}

\begin{dfn}[]\label{T}
Given a finite metric space $M$, a \emph{fully labeled tree  representation $T$  of $M$} is defined to be a tree that is a fully labeled graph representation of $M$. 
\end{dfn}

\section{Problem Description}\label{Problem}
Although every finite metric space $M$ has its  unique complete graph representation $K_M$, a fully labeled tree representation $T$ of $M$ does not necessarily exist for all $M$. This naturally leads to the following problem.

\begin{problem}[]\label{problem}
Given a finite metric space $M$, provide a necessary and sufficient condition to ensure that there is a fully labeled tree representation $T$ of $M$.
\end{problem}

\section{Preliminaries}\label{Preliminaries}
In this section, we describe two constituents of Theorem~\ref{thm}. 

\subsection{Four-point Condition}\label{Four-point}
We briefly recall the notion of partially labeled trees.
Note that we focus on metrics rather than arbitrary dissimilarity maps in this paper.
We refer the reader to \cite{SS} for full details.

\begin{dfn}[]\label{mathcalT}
Given a finite metric space $M:=(X,d_M)$, a tree $\mathcal{T}:=(V, E; X, \phi, w)$ is said to be a \emph{partially labeled tree representation of $M$} if it satisfies the following conditions: 
\begin{enumerate}
\item $\phi$ is a distance-preserving labeling map;
\item $\{v\in V\mid deg(v)\leq 2\}\subseteq \phi(X)$.
\end{enumerate}
\end{dfn}

As the condition (2) in Definition~\ref{mathcalT} only requires each vertex of degree at most two to be labeled with an element of $X$,  $\mathcal{T}$ may have an unlabeled vertex (of degree at least three). 

\begin{rem}[]\label{TmathcalT}
A fully labeled tree representation $T$ of $M$ is necessarily a partially labeled tree representation of $M$ because the condition (2) in Definition~\ref{G} implies the condition (2) in Definition~\ref{mathcalT}. 
\end{rem}

\begin{dfn}[]\label{4PC}
A finite metric space $(X,d_M)$ is said to satisfy the \emph{four-point condition}  if, for every four points $q,r,s,t\in X$, the following inequality holds:
\[
d_M(q,r)+d_M(s,t)\leq \mathrm{max}\{d_M(q,s)+d_M(r,t), d_M(r,s)+d_M(q,t)\}.
\]
\end{dfn}

The following theorem, also known as the fundamental theorem of phylogenetics, characterizes when  a finite metric space can be represented by a partially labeled tree. 

\begin{thm}[Buneman~\cite{B}]\label{Buneman}
Let $M:=(X,d_M)$ be a finite metric space. Then there is a partially labeled tree representation $\mathcal{T}$ of $M$ if and only if $M$ satisfies the four-point condition.
\end{thm}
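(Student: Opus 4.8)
The plan is to establish the two implications separately; necessity is routine, while the sufficiency direction carries the real content. For necessity, suppose $\mathcal{T}$ is a partially labeled tree representation of $M$ and fix four points $q,r,s,t\in X$. The minimal subtree of $\mathcal{T}$ spanning their images is again a tree, and up to relabeling it consists of a single internal path that separates one pair of the four points from the other (the degenerate ``star'' case arising when that path has length zero). Writing each of the three sums $d_M(q,r)+d_M(s,t)$, $d_M(q,s)+d_M(r,t)$, and $d_M(r,s)+d_M(q,t)$ as a sum of edge weights along the relevant paths, one sees at once that two of them are equal while the third is no larger, because the length of the separating internal path is counted twice in the two larger sums and not at all in the smallest. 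Since this holds under every designation of the four points, the inequality of Definition~\ref{4PC} follows.

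For sufficiency I would induct on $n=|X|$, assuming the four-point condition throughout and using that it is inherited by every subset of $X$. The base cases $n\le 3$ are explicit: for $n=3$ take a tripod whose centre is joined to $x_i$ by an edge of length $\tfrac12\big(d_M(x_i,x_j)+d_M(x_i,x_k)-d_M(x_j,x_k)\big)$, a quantity that is non-negative by the triangle inequality of Definition~\ref{metric}. For the inductive step, let $\mathcal{T}'$ be a partially labeled tree representation of $X\setminus\{x_n\}$ supplied by the inductive hypothesis, and attach $x_n$ to it. Fixing a reference point $x_1$, the Gromov products
\[
(x_1\mid x_j)_{x_n}:=\tfrac12\big(d_M(x_1,x_n)+d_M(x_j,x_n)-d_M(x_1,x_j)\big)
\]
record, in any tree realizing these distances, the distance from $x_n$ to the branch point at which its path meets the path from $x_1$ to $x_j$. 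I would use them to single out one attachment point $p$ on $\mathcal{T}'$, subdividing an edge by a new (unlabeled) vertex when $p$ lies in the interior of an edge, and then join $x_n$ to $p$ by a pendant edge of the corresponding length.

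The hard part is to verify that this construction is consistent: that the attachment point $p$ does not depend on the auxiliary choice of reference point, and that the resulting tree reproduces $d_M(x_n,x_j)$ for \emph{every} $j$, not merely for the points used to locate $p$. This is precisely where the four-point condition is indispensable, since it is what forces the locally defined branch points to agree globally and pins down a single insertion site. I would discharge this by applying the four-point inequality of Definition~\ref{4PC} to each quadruple $\{x_1,x_j,x_k,x_n\}$ and comparing it with the distances already realized inside $\mathcal{T}'$, thereby showing that $p$ is the common branch point of $x_n$ relative to all pairs. A short separate check then handles the degenerate cases in which the pendant length vanishes (so that $x_n$ labels a subdivision point or an existing unlabeled vertex) and confirms that subdivision together with a pendant edge preserves both distance-preservation and the requirement in Definition~\ref{mathcalT} that every vertex of degree at most two be labeled, completing the induction.
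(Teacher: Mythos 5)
There is no internal proof to compare against here: the paper states Theorem~\ref{Buneman} as a known result, imported with a citation to Buneman~\cite{B} (see also~\cite{SS}), so your proposal has to be judged on its own merits. What you outline is in fact the classical inductive proof of this theorem. Your necessity argument is complete and correct: in the quartet subtree the two larger pairing sums each count the separating internal path twice, so they are equal and dominate the third, which yields the inequality of Definition~\ref{4PC}. The sufficiency scheme---induct on $|X|$, realize a triple by a tripod with leg lengths given by Gromov products, then attach $x_n$ to the tree for $X\setminus\{x_n\}$ at a point located via Gromov products, subdividing an edge if necessary---is the standard and correct route.

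Two points in the sufficiency half deserve flagging. First, you never say how the Gromov products ``single out'' $p$: the needed prescription is to choose $j$ minimizing $(x_1\mid x_j)_{x_n}$ and attach $x_n$ at distance $\tfrac12\bigl(d_M(x_1,x_j)+d_M(x_1,x_n)-d_M(x_j,x_n)\bigr)$ from $x_1$ along the path toward $x_j$. That a fixed reference point suffices is not automatic; it works because condition (2) of Definition~\ref{mathcalT} applied to $\mathcal{T}'$ forces every leaf to be labeled, so every point of $\mathcal{T}'$---in particular the true attachment point---lies on a path between two labeled vertices and is therefore the median of $x_1,x_j,x_n$ for some $j$. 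Second, the verification that the resulting tree realizes $d_M(x_n,x_j)$ for \emph{every} $j$, which you rightly identify as the crux and propose to extract from the four-point inequality on quadruples $\{x_1,x_j,x_k,x_n\}$, is promised rather than carried out; since that computation is essentially the whole content of the sufficiency direction, as written you have a correct plan rather than a proof. One further small repair: the paper's trees carry strictly positive edge weights, so every vanishing length (a zero tripod leg in the base case $n=3$, as well as the zero pendant length you do mention) must be handled by identifying vertices rather than by inserting a zero-weight edge. With those details supplied, the argument goes through and agrees with the proof Buneman's citation points to.
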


As the following theorem states, a partially labeled tree representation of finite metric space   is uniquely determined for each metric space if it exists. 

\begin{thm}[Hendy~\cite{H}]\label{Hendy}
Let $M:=(X,d_M)$ be a finite metric space. If $M$ satisfies the four-point condition, a partially labeled tree representation $\mathcal{T}$ of $M$ is unique up to isomorphism. 
\end{thm}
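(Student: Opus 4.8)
The plan is to show that $d_M$ determines a partially labeled tree representation completely, by reconstructing both its combinatorial structure and its edge weights directly from $d_M$, and then arguing that \emph{every} representation must be isomorphic to this canonical reconstruction. Throughout I identify each labeled point $x\in X$ with its image $\phi(x)\in V$, which is legitimate since $\phi$ is injective, and I may assume at least one representation exists by Buneman's theorem (Theorem~\ref{Buneman}).

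First I would record the purely metric meaning of the tree's geometry. In any partially labeled tree representation $\mathcal{T}$, a labeled point $z$ lies on the path $[x,\dots,y]$ joining two labeled points $x,y$ if and only if $z\in I(x,y)$, and the betweenness relation linearly orders $I(x,y)$, since $z$ is strictly between $z'$ and $y$ exactly when $z\in I(z',y)\setminus\{z',y\}$. Because $I(x,y)$ and this order depend only on $d_M$, the arrangement of labeled points along every path agrees in any two representations $\mathcal{T}_1$ and $\mathcal{T}_2$. Next I would locate the \emph{unlabeled} vertices. Every unlabeled vertex $v$ has $\deg(v)\geq 3$ by condition~(2) of Definition~\ref{mathcalT}, so each of the at least three components of $\mathcal{T}-v$ contains a leaf, which is labeled; hence $v$ is the meeting point (median) of some triple $a,b,c$ of labeled points, with its distances to them given by Gromov products, e.g. $d_{\mathcal{T}}(a,v)=\tfrac12\bigl(d_M(a,b)+d_M(a,c)-d_M(b,c)\bigr)$. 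Thus each unlabeled branch point, together with its distances to all labeled points, is pinned down by $d_M$ alone.

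With these ingredients I would build the isomorphism $f:V(\mathcal{T}_1)\to V(\mathcal{T}_2)$: send labeled vertices by $\phi_2\circ\phi_1^{-1}$, and send an unlabeled branch point of $\mathcal{T}_1$ realizing a given median to the unlabeled branch point of $\mathcal{T}_2$ realizing the same median (two branch points coincide precisely when they have equal distance to every labeled point, a metric condition, the degree constraint supplying enough labeled leaves to separate distinct vertices). One then checks that $f$ preserves adjacency and that $w_1(e)=w_2(f(e))$ for every edge, each weight being recovered as the distance between its possibly unlabeled endpoints via the Gromov-product formulas above. I expect the main obstacle to be exactly this last verification at the unlabeled vertices: one must rule out that $\mathcal{T}_1$ and $\mathcal{T}_2$ differ by spurious subdivisions or by distinct branching patterns at a median, and show that the $\deg\geq 3$ constraint forbids inserting extra degree-$2$ vertices and forces the branching structure at each median to agree. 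A clean way to organize this is by induction on $|X|$, peeling off a labeled leaf and suppressing the resulting degree-$2$ unlabeled vertex (merging its two incident edge weights) to reduce to a smaller label set; the delicate point is then to verify that this suppression is well defined and compatible across both representations.
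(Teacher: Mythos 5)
The paper does not prove this statement at all---it is imported verbatim from Hendy's paper \cite{H} as a known result---so there is no internal proof to compare against, and your proposal must be judged on its own merits. Its skeleton is sound and is the standard reconstruction route: a labeled point $z$ lies on $[x,\cdots,y]$ iff $z\in I(x,y)$ (the paper itself uses this fact inside the proof of Theorem~\ref{thm}); every unlabeled vertex $v$ has $\deg(v)\geq 3$ by condition~(2) of Definition~\ref{mathcalT}; each of the components of $\mathcal{T}-v$ contains a leaf of $\mathcal{T}$, which is labeled, so $v$ is the median of a labeled triple with distances to $X$ given by Gromov products. Hence the vertex set of any representation is exactly $\phi(X)$ together with the medians of labeled triples, each pinned down by its distance vector $(d_{\mathcal{T}}(\cdot,\phi(x)))_{x\in X}$, which is computable from $d_M$ alone.

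The step you flag as the ``main obstacle''---verifying adjacency and weights, and excluding spurious subdivisions---is indeed where your write-up stops short, but it closes with one lemma you have all the ingredients for yet never state: for any two vertices $u,v$ of a representation,
\begin{equation*}
d_{\mathcal{T}}(u,v)\;=\;\max_{x\in X}\,\bigl|\,d_{\mathcal{T}}(\phi(x),u)-d_{\mathcal{T}}(\phi(x),v)\,\bigr|,
\end{equation*}
since $\leq$ is the triangle inequality and equality is attained by extending $[v,\cdots,u]$ beyond $u$ to a leaf, which is labeled. This shows the distance vector to $\phi(X)$ determines the entire metric on $V$, so your bijection $f$ (labeled vertices via $\phi_2\circ\phi_1^{-1}$, medians matched by common distance vectors) is an isometry of the vertex metric spaces commuting with the labelings; adjacency then follows because $u,v$ are adjacent iff no third vertex lies in $I(u,v)$ (positive weights), and each edge weight equals $d_{\mathcal{T}}(u,v)$. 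This single observation dissolves both of your residual worries at once: an unlabeled degree-$2$ vertex is outlawed directly by Definition~\ref{mathcalT}(2) while a labeled one is a point of $X$ and hence matched by $f$, and branching patterns at medians agree automatically because adjacency is now a metric notion. Your fallback induction on $|X|$ is consequently unnecessary, and would be more delicate than the direct argument, exactly for the reason you suspect: peeling off a labeled leaf can create an unlabeled degree-$2$ vertex, and the suppression step changes edge weights in a way that must be checked to interact consistently with both representations. So: correct strategy, one cleanly fixable gap, and since the paper supplies no proof of Hendy's theorem, your argument is a legitimate self-contained substitute once the displayed lemma is added.
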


\begin{rem}[]\label{blockgraph}
A graph $G$ such that the metric space $(V(G),d_G)$ satisfies the four-point condition is also known as a \emph{block graph}. 
\end{rem}

\subsection{Fourth-point Condition}\label{Fourth-point}
Theorem~\ref{Buneman} does not give an answer to  Problem~\ref{problem}. This motivates us to introduce another condition defined as follows. 

\begin{dfn}[Figure~\ref{fig:satisfy4thpc};\cite{HF}]
A finite metric space $(X,d_M)$ is said to satisfy the 
\emph{fourth-point condition}  if, for every three points $x,y,z\in X$, there exists a point $p^{*}\in X$ such that
\[
d_M(x,p^{*})+d_M(y,p^{*})+d_M(z,p^{*})=\frac{1}{2} \{d_M(x,y)+d_M(y,z)+d_M(z,x)\}.
\]
\end{dfn}

\begin{figure}[htbp]
\centering
\includegraphics[width=0.18\textwidth]{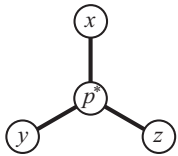}
\caption{The fourth point $p^*$ for a triplet $\{x,y,z\}$ \label{fig:satisfy4thpc}}
\end{figure}


We restate the following result from \cite{HF}, which will be useful in the proof of Theorem~\ref{thm}.
\begin{prop}[\cite{HF}]\label{lem:4thPC_alternative}
The following is equivalent to saying that a finite metric space $(X,d_M)$ satisfies the fourth-point condition:  
For every three points $x, y, z \in X$, there
exists only one point $p^*\in I(x,y)\cap I(y,z)\cap I(z,x)\subseteq X$. 
\end{prop}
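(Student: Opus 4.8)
The plan is to translate membership in $I(x,y)\cap I(y,z)\cap I(z,x)$ into a system of equalities and compare it with the defining equation of the fourth-point condition. First I would observe that, by Definition~\ref{I}, a point $p\in X$ lies in $I(x,y)$ precisely when $d_M(x,p)+d_M(p,y)=d_M(x,y)$, and likewise for $I(y,z)$ and $I(z,x)$. Thus $p$ belongs to the triple intersection if and only if
\[
d_M(x,p)+d_M(p,y)=d_M(x,y),\quad d_M(y,p)+d_M(p,z)=d_M(y,z),\quad d_M(z,p)+d_M(p,x)=d_M(z,x).
\]
Adding these three equalities gives $2\{d_M(x,p)+d_M(y,p)+d_M(z,p)\}=d_M(x,y)+d_M(y,z)+d_M(z,x)$, which is exactly the fourth-point equation for $\{x,y,z\}$, so every point of the intersection is a fourth point.

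For the reverse inclusion I would invoke the triangle inequality $d_M(x,p)+d_M(p,y)\ge d_M(x,y)$ together with its two cyclic analogues, which hold for every $p$. Their sum reads $2\{d_M(x,p)+d_M(y,p)+d_M(z,p)\}\ge d_M(x,y)+d_M(y,z)+d_M(z,x)$, with equality if and only if each of the three inequalities is an equality. Hence a point satisfying the fourth-point equation makes the summed inequality tight, which forces all three triangle inequalities to be equalities and therefore places the point in $I(x,y)\cap I(y,z)\cap I(z,x)$. Combining the two inclusions, the set of fourth points of $\{x,y,z\}$ coincides with this intersection; in particular the fourth-point condition holds for $(X,d_M)$ exactly when the intersection is nonempty for every triple, which settles the existence part of the claim.

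It remains to justify that the intersection contains \emph{only one} point. Reading the three displayed equalities as a linear system in $d_M(x,p)$, $d_M(y,p)$, $d_M(z,p)$, I note that it has the unique solution $d_M(x,p)=\tfrac{1}{2}\{d_M(z,x)+d_M(x,y)-d_M(y,z)\}$ and its cyclic companions, so any two points $p,q$ of the intersection are equidistant from each of $x,y,z$. To promote this to genuine uniqueness I would apply the fourth-point condition to the auxiliary triples $\{x,p,q\}$, $\{y,p,q\}$, $\{z,p,q\}$ formed by two candidate points $p,q$: by the equivalence just established, each yields a point of $I(p,q)$ equidistant from $p$ and $q$ (a metric midpoint), which I would then feed back into $I(x,y)$, $I(y,z)$, $I(z,x)$ in an attempt to force $d_M(p,q)=0$. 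I expect \emph{this} uniqueness step to be the main obstacle: the existence half is an immediate consequence of the summation identity, whereas excluding two distinct equidistant points from the intersection is the delicate heart of the argument, and it is precisely here that the structure of $(X,d_M)$ must be exploited with care.
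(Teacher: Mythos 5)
Your first two paragraphs are correct and, in fact, capture the substance of the result: summing the three interval equalities, and conversely observing that the fourth-point equation forces the summed triangle inequalities to be tight, shows that for each triple $\{x,y,z\}$ the set of points satisfying the fourth-point equation is \emph{exactly} $I(x,y)\cap I(y,z)\cap I(z,x)$. (The paper itself imports Proposition~\ref{lem:4thPC_alternative} from \cite{HF} without reproving it, and this set equality is the core of that proof.) Your Gromov-product computation $d_M(x,p)=\tfrac{1}{2}\{d_M(z,x)+d_M(x,y)-d_M(y,z)\}$ is also right and shows that any two points of the intersection are equidistant from $x$, $y$, $z$.

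The genuine gap is your final paragraph, and it is worse than the ``delicate obstacle'' you anticipate: deriving uniqueness from bare existence is not merely hard, it is false. Consider the shortest-path metric of $K_{2,3}$, i.e.\ $X=\{a,b,c,u,v\}$ with $d(a,b)=d(b,c)=d(a,c)=d(u,v)=2$ and all distances between $\{a,b,c\}$ and $\{u,v\}$ equal to $1$. Every triple admits a fourth point in the existence sense: for $\{a,b,c\}$ take $u$ (sum $3$, half-perimeter $3$); for $\{a,b,u\}$ take $u$ itself; for $\{a,u,v\}$ take $a$ itself. Yet the triple $\{a,b,c\}$ has \emph{two} fourth points, $u$ and $v$, since $I(a,b)\cap I(b,c)\cap I(a,c)=\{u,v\}$. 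Your proposed repair also stalls exactly here: applying the condition to the auxiliary triple $\{u,v,a\}$ returns the midpoint $a$ (indeed $d(a,u)=d(a,v)=\tfrac{1}{2}d(u,v)$), and feeding it back produces no contradiction, so $d(p,q)=0$ cannot be forced. The correct reading of the statement---consistent with \cite{HF}, with the definite article in Figure~\ref{fig:satisfy4thpc} (``the fourth point''), and with Remark~\ref{mediangraph} (median graphs demand \emph{unique} medians, and $K_{2,3}$ is the standard non-median example)---is that the fourth point $p^{*}$ in the fourth-point condition is itself unique. Under that reading, the set equality you already established finishes the proof in one line: the set of fourth points of $\{x,y,z\}$ coincides with $I(x,y)\cap I(y,z)\cap I(z,x)$, so one is a singleton if and only if the other is. In short, the uniqueness step you flagged as the heart of the argument is either immediate (uniqueness transfers across your set equality) or unprovable (if, as you set it up, one tries to extract it from existence alone).
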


\begin{rem}[]\label{mediangraph}
A graph $G$ such that the metric space $(V(G),d_G)$ satisfies the fourth-point condition is also known as a \emph{median graph}. 
\end{rem}

\section{Main Results}\label{Main}
We solve Problem~\ref{problem} by proving the following theorem. 
\begin{thm}[]\label{thm}
Let $M:=(X, d_M)$ be a finite metric space. Then, there is a fully labeled tree representation $T$ of $M$ if and only if $M$ satisfies both the four-point condition and the fourth-point condition. 
\end{thm}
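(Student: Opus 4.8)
The plan is to use Buneman's theorem (Theorem~\ref{Buneman}) as the bridge between the metric and the tree structure, and to convert the fourth-point condition into a statement about which vertices of the Buneman tree carry labels, exploiting the interval characterization in Proposition~\ref{lem:4thPC_alternative}.

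For the forward implication I would argue as follows. Suppose a fully labeled tree representation $T$ exists. Since $T$ is in particular a partially labeled tree representation (Remark~\ref{TmathcalT}), Theorem~\ref{Buneman} immediately gives the four-point condition. For the fourth-point condition, fix $x,y,z\in X$ and take $m$ to be the \emph{median} of $\phi(x),\phi(y),\phi(z)$ in $T$, that is, the unique vertex lying on each of the three paths $[\phi(x),\phi(y)]$, $[\phi(y),\phi(z)]$, $[\phi(z),\phi(x)]$. Because $T$ is fully labeled, $m=\phi(p^{*})$ for some $p^{*}\in X$. As each pairwise path runs through $m$, distance-preservation yields $d_M(x,y)=d_M(x,p^{*})+d_M(p^{*},y)$ together with the two analogous identities; summing these three equalities and dividing by two produces exactly the defining equation of the fourth-point condition, so $p^{*}$ is the required witness.

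For the converse I would start from a partially labeled tree representation $\mathcal{T}=(V,E;X,\phi,w)$, which exists by Theorem~\ref{Buneman}, and show that the fourth-point condition forces $\phi$ to be surjective, making $\mathcal{T}$ fully labeled. By Definition~\ref{mathcalT} the only possibly unlabeled vertices are those of degree at least three, so suppose toward a contradiction that some vertex $v$ of degree $\geq 3$ is unlabeled. Following edges away from $v$ into any one of its (at least three) branches reaches a leaf, which has degree one and is therefore labeled; I pick labeled vertices $x,y,z$ in three distinct branches, whence $v$ is their median. The crux is the interval translation: since $\phi$ is distance-preserving, $i\in I(x,y)$ holds precisely when $\phi(i)$ lies on the path $[\phi(x),\phi(y)]$, so $I(x,y)\cap I(y,z)\cap I(z,x)$ is exactly the set of labeled vertices common to all three paths. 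In a tree that common vertex is unique and equals the median $v$; as $v$ is unlabeled, the triple intersection is empty, contradicting Proposition~\ref{lem:4thPC_alternative}, which supplies exactly one point there. Hence $\mathcal{T}$ has no unlabeled vertex and is the desired fully labeled tree.

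I expect the converse to be the main obstacle, and within it the delicate point is matching the purely metric interval characterization of Proposition~\ref{lem:4thPC_alternative} against the combinatorial geometry of medians in a tree: one must verify that at any candidate unlabeled vertex three labeled representatives can always be chosen in distinct branches, and that the only vertex common to the three connecting paths is the median itself. Hendy's uniqueness theorem (Theorem~\ref{Hendy}) is not strictly needed for existence, but it confirms that the fully labeled tree obtained this way is the canonical one.
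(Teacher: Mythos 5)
Your proposal is correct and follows essentially the same route as the paper's proof: Buneman's theorem supplies the partially labeled tree, the fourth-point condition transfers through the distance-preserving map $\phi$, and an unlabeled vertex $v$ of degree at least three is ruled out by choosing labeled vertices $x,y,z$ in three distinct branches and observing that $I(x,y)\cap I(y,z)\cap I(z,x)$ would then contain no point of $X$, contradicting Proposition~\ref{lem:4thPC_alternative}. The only differences are cosmetic: you spell out the forward direction via the median vertex where the paper simply asserts it, and you correctly note that Hendy's uniqueness theorem is invoked by the paper but not needed for existence.
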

\begin{proof}
For any finite metric space of which a fully labeled tree representation exists, both the four-point condition (4PC) and the fourth-point condition (4thPC) clearly hold. 
Assuming $M$ satisfies these two conditions, we prove the converse. 
Because $M$ satisfies the 4PC, Theorem~\ref{Buneman}  and Theorem~\ref{Hendy} ensure that there is a unique partially labeled tree representation $\mathcal{T}$ of $M$. Let $(V,E;X,\phi,w)$ denote  $\mathcal{T}$. 
The assumption that $(X,d_M)$ satisfies the 4thPC implies that $(\phi(X),d_\mathcal{T})$ also satisfies the 4thPC because  $\phi:X\mapsto V$ is a distance-preserving labeling map. 
Note that for any two distinct points $u$ and $v$ in the metric space $(V,d_\mathcal{T})$, the set of all vertices contained in the path $[u,\cdots,v]$  is identical to the closed metric interval $I(u,v)$ between $u$ and $v$ because  $\mathcal{T}$ is a positive-weighted tree. 

In order to obtain a contradiction, we suppose there is a vertex $v$ of $\mathcal{T}$ such that $deg(v)\geq 3$ and $v\not\in\phi(X)$.
Then, there are three distinct vertices $a, b, c\in V$ that are adjacent to $v$. For $v_1\in\{a,b,c\}$, we consider the following two cases:
\begin{description}
\item[Case 1. $v_1\in\phi(X)$]\mbox{}\\
We set $x:=v_1$. 
\item[Case 2. $v_1\not\in\phi(X)$]\mbox{}\\
The vertex $v_1$ is not a leaf of $\mathcal{T}$ by the condition (2) in Definition~\ref{mathcalT}. Therefore, there is a vertex $v_2 (\neq v)$ of $\mathcal{T}$ that is adjacent to  $v_1$. In the case of $v_2\in\phi(X)$, Case 1 applies. In the case of $v_2\not\in\phi(X)$, we repeat the same process for  $v_2$. 
We continue the process for $v_3, v_4,\cdots, v_i$ similarly until we find  a vertex $v_i\in\phi(X)$. Note that this process ends in a finite number of steps because $\mathcal{T}$ is a finite tree. We set $x:=v_i$. 
\end{description}
Therefore, regardless of whether  $v_1$ is labeled or not, 
we can find  a labeled vertex $x\in\phi(X)$.
The vertices $v$ and $x$ specify the path $[v,v_1,\cdots,x]$ in  $\mathcal{T}$. 
Applying the same argument to each of the triplet $\{a,b,c\}$, we obtain three distinct labeled vertices $x,y,z\in\phi(X)$ of $\mathcal{T}$. The vertex $v$ is the only vertex of $\mathcal{T}$ which the three paths $[v,a,\cdots,x]$, $[v,b,\cdots,y]$ and $[v,c,\cdots,z]$ have in common (otherwise, $\mathcal{T}$ would not be a tree). This gives $I(v,x)\cap I(v,y)\cap I(v,z)=\{v\}$. 
Also, we have $I(x,y)\cap I(x,z)=I(x,v)$ by using $I(x,y)=I(x,v)\cup I(v,y)$ and $I(x,z)=I(x,v)\cup I(v,z)$. 
Then, for distinct three points $x,y,z\in\phi(X)$, $I(x, y)\cap I(y, z)\cap I(z, x) = I(x, v)\cap I(y, v)\cap I(z, v)=\{v\}$, where $v\not\in\phi(X)$. 
Then, Proposition~\ref{lem:4thPC_alternative} states that the 4thPC does not hold for $(\phi(X),d_\mathcal{T})$, but this is a contradiction. Hence, if $M$ satisfies both the 4PC and the 4thPC, every vertex of $\mathcal{T}$ is labeled with an element of  $X$, which means that  $\mathcal{T}$ is a fully labeled tree representation of $M$. This completes the proof.
\end{proof}

Theorem~\ref{thm} can be restated as the following corollary using Remark~\ref{blockgraph} and Remark~\ref{mediangraph}.  

\begin{cor}[]\label{cor}
A finite graph is a tree if and only if it is a block graph and is also a median graph. 
\end{cor}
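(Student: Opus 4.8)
The plan is to read Corollary~\ref{cor} as nothing more than the graph-theoretic transcription of Theorem~\ref{thm}, with the two metric conditions replaced by the names supplied in Remark~\ref{blockgraph} and Remark~\ref{mediangraph}. For a finite connected graph $G$, I would write $M_G:=(V(G),d_G)$ for the finite metric space induced by its shortest-path metric, and then assemble the chain of equivalences: $G$ is a tree $\iff$ $M_G$ admits a fully labeled tree representation $\iff$ (by Theorem~\ref{thm}) $M_G$ satisfies both the four-point and the fourth-point conditions $\iff$ (by Remark~\ref{blockgraph} and Remark~\ref{mediangraph}) $G$ is simultaneously a block graph and a median graph. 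The middle equivalence is exactly Theorem~\ref{thm} applied to $M_G$, and the last is a verbatim reading of the two remarks, so the entire content of the corollary is concentrated in the first equivalence, relating the graph $G$ to the existence of a fully labeled tree representation of its own metric.

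The forward half of that first equivalence is immediate: if $G$ is a tree, then the identity map $\mathrm{id}\colon V(G)\to V(G)$ is a bijective labeling that is trivially distance-preserving, so $G$ is itself a fully labeled tree representation of $M_G$ in the sense of Definition~\ref{T}; Theorem~\ref{thm} and the two remarks then hand back that $G$ is both a block graph and a median graph. For the reverse half I would suppose $M_G$ has a fully labeled tree representation $T=(V,E;V(G),\phi,w)$ with $\phi$ a distance-preserving bijection, and aim to promote $\phi$ to a genuine graph isomorphism from $G$ onto the tree $T$, which forces $G$ to be a tree. Since $\phi$ is an isometry, the closed-interval functions coincide, so that $i\in I(x,y)$ in $M_G$ if and only if $\phi(i)$ lies on the path $[\phi(x),\cdots,\phi(y)]$ in $T$; because $T$ is a positively weighted tree, two of its vertices are adjacent precisely when the interval between them consists only of its endpoints. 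Translating through $\phi$, adjacency in $T$ is recovered from the metric by the condition $I(x,y)=\{x,y\}$ in $M_G$.

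The crux, and the step I expect to be the main obstacle, is the rigidity claim that the same metric condition detects adjacency in $G$ itself, so that $E(G)$ and $E(T)$ correspond under $\phi$ and $G$ carries no extra edges that would create a cycle. In a simple graph a geodesic with no interior vertex is a single edge, so $xy\in E(G)$ forces $I(x,y)=\{x,y\}$ as soon as every edge realizes the distance between its endpoints; conversely $I(x,y)=\{x,y\}$ makes the unique geodesic a lone edge. This is exactly the point that upgrades ``the metric of $G$ is a tree metric'' to ``$G$ is a tree,'' and it is automatic in the classical setting in which block graphs and median graphs are studied, where edges realize distances. Once this edge-to-interval correspondence is pinned down, $\phi$ matches the edge sets, $G\cong T$, and the reverse direction—hence the whole corollary—follows.
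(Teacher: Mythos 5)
Your proposal is correct in substance and in fact supplies strictly more than the paper does: the paper offers no proof of Corollary~\ref{cor} at all, simply declaring it to be Theorem~\ref{thm} ``restated'' via Remark~\ref{blockgraph} and Remark~\ref{mediangraph}. That restatement silently identifies ``$(V(G),d_G)$ admits a fully labeled tree representation'' with ``$G$ is a tree,'' and your final two paragraphs are precisely the bridge the paper omits: adjacency in the representing tree $T$ is detected by the interval condition $I(x,y)=\{x,y\}$, and one must verify that the same condition detects adjacency in $G$ itself, so that the isometry $\phi$ upgrades to a graph isomorphism and $G$ inherits treeness from $T$. You also correctly locate this edge-recovery step as the only nontrivial content of the corollary beyond Theorem~\ref{thm}.

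One caution on the rigidity step itself: your stated sufficient condition, ``every edge realizes the distance between its endpoints,'' is too weak under the paper's own conventions, where all graphs are positively \emph{weighted}. Take the triangle on $\{x,z,y\}$ with $w(xz)=w(zy)=1$ and $w(xy)=2$: the edge $xy$ does realize $d_G(x,y)=2$, yet $I(x,y)=\{x,z,y\}\neq\{x,y\}$, so your implication ``$xy\in E(G)$ forces $I(x,y)=\{x,y\}$'' fails. Worse, $(V(G),d_G)$ here is just the path metric on three points, which satisfies both the four-point and fourth-point conditions; so under the metric-based definitions in Remark~\ref{blockgraph} and Remark~\ref{mediangraph}, this weighted $G$ is a block graph and a median graph without being a tree, and the corollary as literally stated is false in the weighted setting. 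The statement is true under the classical unweighted reading (where $xy\in E(G)\Leftrightarrow d_G(x,y)=1\Leftrightarrow I(x,y)=\{x,y\}$, and your argument closes cleanly), or in the weighted setting under the stronger hypothesis that each edge is the \emph{unique} geodesic between its endpoints. With that hypothesis made explicit your chain of equivalences is complete; the residual imprecision lies in the paper's formulation, which your proof attempt has in fact exposed, rather than in your argument.
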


\section{Relationship to the Minimum Spanning Tree}\label{Relationship}
In this section, we only consider fully labeled graph representations. This allows us to identify a set of labeled vertices with the label set itself, so we write $(X,E;w)$ rather than $(V,E;X,\phi,w)$ for notational simplicity. Also, we may identify a label $x\in X$ with the corresponding labeled vertex $\phi(x)\in V$, and use the same symbol $x$ for each.

The following proposition states that, if it exists, a fully labeled tree representation $T$  of $M$  can be found by solving the MST problem. 
\begin{prop}[]\label{prop:MST}
Let $M:=(X,d_M)$ be a finite metric space, and $K_M:=(X,\binom{X}{2};d_M)$ be the complete graph representation of $M$. 
If there is a fully labeled tree representation $T$ of $M$, then $T$ is uniquely determined up to isomorphism. 
Moreover,  $T$ is isomorphic to the only MST in $K_M$.
\end{prop}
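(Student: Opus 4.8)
The plan is to dispatch the two claims in turn, using the uniqueness already available for partially labeled trees to obtain the first, and the cut property of minimum spanning trees to obtain the second. For uniqueness of $T$ up to isomorphism, I would note that by Remark~\ref{TmathcalT} every fully labeled tree representation of $M$ is in particular a partially labeled tree representation of $M$; its mere existence forces $M$ to satisfy the four-point condition (Theorem~\ref{Buneman}), whereupon Theorem~\ref{Hendy} guarantees that the partially labeled tree representation of $M$ is unique up to isomorphism. Any two fully labeled tree representations are therefore each isomorphic to this common object, hence to one another.

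For the minimum spanning tree claim I adopt the simplified notation $T=(X,E_T;w)$. The first observation is that distance-preservation pins down the weights: for each $xy\in E_T$ the edge itself is the unique $x$--$y$ path in $T$, so $w(xy)=d_T(x,y)=d_M(x,y)$, and $T$ is a spanning tree of $K_M$ because it is a tree on all of $X$. The core step is then a cut argument. Fixing $xy\in E_T$ and deleting it partitions $X$ into two parts $A\ni x$ and $B\ni y$, with $xy$ the only edge of $T$ across the cut. For any competing crossing edge $uv$ of $K_M$ with $u\in A$ and $v\in B$, the unique $T$-path from $u$ to $v$ must pass through $xy$, so
\[
d_M(u,v)=d_T(u,v)=d_T(u,x)+w(xy)+d_T(y,v)\geq w(xy),
\]
with equality exactly when $u=x$ and $v=y$, since the weights are strictly positive. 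Thus $xy$ is the \emph{unique} lightest edge across this cut.

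It then remains to invoke the standard cut property: an edge that is the unique minimum-weight edge across some cut lies in every minimum spanning tree of $K_M$. Applying this to each $xy\in E_T$ shows that all $n-1$ edges of $T$ belong to every MST; since any spanning tree of $K_M$ has exactly $n-1$ edges, every MST must coincide with $T$, which simultaneously proves that $T$ is an MST and that it is the only one. I expect the cut computation displayed above to be the one delicate point, as it is precisely there that distance-preservation is used to translate the metric into $T$-path lengths and the strict positivity of $w:E\mapsto\mathbb{R}^+$ is used to upgrade ``minimum'' to ``unique minimum''; the remaining steps are routine facts about spanning trees.
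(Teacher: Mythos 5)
Your proposal is correct, and while its first half (uniqueness of $T$ via Remark~\ref{TmathcalT}, Theorem~\ref{Buneman}, and Theorem~\ref{Hendy}) coincides exactly with the paper's argument, your second half takes a genuinely different route to the MST claim. The paper argues \emph{globally}: writing $T=(X,E;w)$ and letting $T'$ be any other spanning tree of $K_M$ with edge set $E'\neq E$, it decomposes $length(T)$ and $length(T')$ over $E\cap E'$, $E\setminus E'$, and $E'\setminus E$, and then establishes a path-covering claim --- every $pq\in E\setminus E'$ lies on the $T$-path $[r,\cdots,s]_T$ of some $rs\in E'\setminus E$, whence $E\setminus E'\subsetneq\bigcup_{xy\in E'\setminus E}E([x,\cdots,y]_T)$ --- so that strict positivity of the weights gives $\sum_{e\in E\setminus E'}w(e)<\sum_{xy\in E'\setminus E}d_T(x,y)$ and hence $length(T)<length(T')$. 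You instead argue \emph{locally}, edge by edge: your computation $d_M(u,v)=d_T(u,x)+w(xy)+d_T(y,v)\geq w(xy)$, with equality precisely when $(u,v)=(x,y)$, shows each $xy\in E$ is the unique lightest edge across the cut it induces, and the standard cut property then forces all $n-1$ edges of $T$ into every MST, so every MST coincides with $T$. The trade-off is clear: your version is more modular and pinpoints exactly where distance preservation and strict positivity of $w$ enter, but it outsources the exchange reasoning to the cut property, which a self-contained paper would have to prove by the usual swap argument; the paper's version is self-contained but hinges on the more delicate combinatorial inclusion above. Both deliver existence and uniqueness of the MST simultaneously --- in your formulation you should say explicitly that an MST exists because $K_M$ is finite and connected, so that ``lies in every MST'' is not vacuously satisfied.
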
 
\begin{proof}
We first note that Theorem~\ref{Hendy} ensures the uniqueness of a fully labeled tree representation $T$ of $M$, if it exists (recall Remark~\ref{TmathcalT}).  

Let  $(X,E;w)$ denote $T$. We see that $T$ is a spanning subtree of $K_M$ because we have $V(T)=V(K_M)$, and as the condition (1) in Definition~\ref{G} implies,  $w(xy)=d_M(x,y)$ holds for all $xy\in E(T)$. 
Let $T^{\prime}$ be a spanning subtree of $K_M$ with an edge set $E^\prime$ ($\neq E$).  
In what follows, a path joining vertices $x$ and $y$ in $T$ (or $T^\prime$) is represented using $[x,\cdots, y]_T$ (or $[x,\cdots, y]_{T^\prime}$).
The lengths of $T$ and $T^\prime$ are as follows: $length(T)=\sum_{e\in E\cap E^\prime}{w(e)}+\sum_{e\in E\setminus E^\prime}{w(e)}$;  $length(T^\prime)=\sum_{e\in E\cap E^\prime}{w(e)}+\sum_{xy\in E^\prime \setminus E}{d_T(x,y)}$ (recall that each edge $xy\in E^\prime \setminus E$ has the weight $d_M(x,y)=d_T(x,y)$). We will show that  $length(T)< length(T^\prime)$ holds. 

We claim that for any $pq \in E\setminus E^\prime$, 
there exists $rs\in E([p,\cdots, q]_{T^\prime})\setminus E$ such that $[r,\cdots, s]_{T}$  contains $pq$. 
Because $T^\prime$ is a tree, for any $pq\in E\setminus E^\prime$, there is a unique path $[p,\cdots,q]_{T^\prime}$. If all edges in $[p,\cdots,q]_{T^\prime}$ were in $E$, then the union of $[p,\cdots,q]_{T^\prime}$ and $pq$ would form a cycle  $C$, so $T$ would not be a tree. Then, there is an edge   $rs \in E^{\prime}\setminus E$ that is contained in $[p,\cdots,q]_{T^\prime}$. Let  $S:=E(C)\cap E$, and $S^\prime:=E(C)\setminus E$. If $pq\not\in E([r,\cdots, s]_T)$ holds for any $rs\in S^{\prime}$, the union of $S$ and $\bigcup_{rs\in S^{\prime}}{E([r,\cdots, s]_T)}$ would form a cycle, so $T$ would not be a tree. Then, it follows that  $E\setminus E^\prime\subsetneq \bigcup_{xy\in {E^\prime}\setminus E}{E([x,\cdots,y]_T)}$ holds (note that $[r,\cdots, s]_{T}$ has at least one edge other than $pq$). Recalling that all  weights are strictly  positive, we have  $\sum_{e\in E\setminus E^\prime}{w(e)}<\sum_{xy\in E^\prime \setminus E}{d_T(x,y)}$, and conclude that $T$ is a unique MST in $K_M$. This completes the proof.
\end{proof}

Proposition~\ref{prop:MST} gives the following corollary of Theorem~\ref{thm}.

\begin{cor}[]\label{cor2}
Let $M:=(X,d_M)$ be a finite metric space, and  $T_M$ be a minimum spanning tree in the complete graph $K_M:=(X,\binom{X}{2};d_M)$. 
Then, $T_M$ and $K_M$ are isometric if and only if $M$ satisfies both the four-point condition and the fourth-point condition.
\end{cor}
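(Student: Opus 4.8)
The plan is to reduce the statement to the two results already in hand, namely Theorem~\ref{thm} and Proposition~\ref{prop:MST}, after first reinterpreting the isometry condition in the language of fully labeled tree representations. The key preliminary observation is that, under the identification of labels with vertices adopted in this section, $T_M$ is a spanning tree of $K_M$ on the vertex set $X$, so the assertion that $T_M$ and $K_M$ are isometric is precisely the assertion that $d_{T_M}(x,y)=d_M(x,y)$ for all $x,y\in X$ (using that $d_{K_M}=d_M$, which follows from the triangle inequality, so that the direct edge realizes the shortest path in $K_M$). In other words, \emph{$T_M$ and $K_M$ are isometric if and only if $T_M$ is a fully labeled tree representation of $M$} in the sense of Definition~\ref{T}. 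Once this translation is in place, the corollary becomes a matter of combining the existence criterion of Theorem~\ref{thm} with the MST identification of Proposition~\ref{prop:MST}.

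For the forward direction, I would assume that $T_M$ and $K_M$ are isometric. By the observation above, $T_M$ is then a fully labeled tree representation of $M$, so such a representation exists. Theorem~\ref{thm} immediately yields that $M$ satisfies both the four-point condition and the fourth-point condition.

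For the converse, I would assume that $M$ satisfies both conditions. Theorem~\ref{thm} then guarantees the existence of a fully labeled tree representation $T$ of $M$. Proposition~\ref{prop:MST} asserts that $T$ is unique up to isomorphism and is isomorphic to the unique MST of $K_M$, which is $T_M$. Since $T$ preserves all pairwise distances, transporting this property across the isomorphism $T\cong T_M$ shows that $T_M$ is likewise distance-preserving, hence $T_M$ and $K_M$ are isometric.

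The only real subtlety---and the step I would treat most carefully---is the bookkeeping around the word \emph{isometric} and the ``up to isomorphism'' clauses. I must confirm that $d_{K_M}=d_M$, so that isometry of $T_M$ with $K_M$ coincides with distance-preservation of the inclusion $X\hookrightarrow V(T_M)$, and I must check that the isomorphism supplied by Proposition~\ref{prop:MST} transports distance-preservation, which it does since it is edge- and weight-preserving and distances in a tree are determined by the weighted path structure. There is no genuine mathematical obstacle here; the entire content lies in aligning the definitions so that the two cited results slot together cleanly.
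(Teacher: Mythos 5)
Your proposal is correct and follows essentially the same route as the paper, which derives the corollary exactly by combining Theorem~\ref{thm} with Proposition~\ref{prop:MST}; your translation of the isometry condition into ``$T_M$ is a fully labeled tree representation of $M$'' (via $d_{K_M}=d_M$ from the triangle inequality) is precisely the bookkeeping the paper leaves implicit. No gaps.
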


\section{Conclusion}\label{Conclusion}
Stimulated by biological applications of the MST problem, 
we have addressed Problem~\ref{problem} to determine when a distance matrix of order $n$ can be represented by a fully labeled tree on $n$ vertices. 
We have settled it by proving Theorem~\ref{thm}, where our fourth-point
condition is combined with Buneman's four-point condition. 
As we have shown in Proposition~\ref{prop:MST}, 
given a finite metric space that satisfies both the four-point condition and the fourth-point
condition, solving the MST problem gives a unique fully labeled tree
that preserves all information about the metric space. 
Thus, as summarized in Corollary~\ref{cor2}, we have characterized when there is an exact fit between a finite metric space and the MST. 

The present work has implications both mathematically and biologically. 
From a general perspective, we expect that this work can help establish quantitative criteria for measuring the spanning tree-likeness of a finite metric space.  
From the viewpoint of mathematical and computational biology, as   described in Section~\ref{Introduction}, one particularly important application would be cellular tree estimation. Thus, we believe that this work will extend the range of biological applications of the four-point condition, which has been mostly confined so far to the context of phylogenetic tree inference.
\bibliographystyle{amsplain}
\bibliography{forTCBB}
\section*{Acknowledgment}
This work was supported in part by JSPS KAKENHI Grant Number 25120012 and 26280009. 
M.~H. and K.~F. thank Ruriko Yoshida for helpful comments.
\end{document}